\documentclass[twocolumn, superscriptaddress, preprintnumbers, amsmath, amssymb, aps]{revtex4}

\usepackage{graphicx}
\usepackage[caption=false]{subfig}
\usepackage{amsthm}
\usepackage{amsmath}
\usepackage{tensor}
\usepackage{color}
\usepackage{bm}
\usepackage[all]{xy}
\usepackage{tikz}
\usepackage{dsfont}
\usepackage{times,txfonts}
\usepackage{algorithm}
\usepackage{algpseudocode}
\usetikzlibrary{positioning}
\usepackage{braket}
\usepackage{enumitem}
\usepackage{booktabs}
\usepackage{makecell}
\usepackage{multirow}
\usepackage{diagbox}

\usepackage{array}

\newtheorem{Thm}{Theorem}
\newtheorem{Lem}[Thm]{Lemma}

\newtheorem{Cor}[Thm]{Corollary}
\theoremstyle{definition}

\begin{document}

\title{
Quantum-Assisted Unambiguous Discrimination of Pure States with Robustness Bounds
}

\author{Jinhyeok Heo}
\affiliation{
Department of Mathematics and Research Institute for Basic Sciences, Kyung Hee University, Seoul, 02447, Korea
}

\author{Dongkeun Lee}
\affiliation{
Center for Quantum Information R\&D, Korea Institute of Science and Technology Information~(KISTI), Daejeon 34141, Republic of Korea
}
\affiliation{
Division of Quantum Information, KISTI School, Korea University of Science and Technology, Daejeon 34141, Republic of Korea
}

\author{Junghee Ryu}
\affiliation{
Center for Quantum Information R\&D, Korea Institute of Science and Technology Information~(KISTI), Daejeon 34141, Republic of Korea
}
\affiliation{
Division of Quantum Information, KISTI School, Korea University of Science and Technology, Daejeon 34141, Republic of Korea
}

\author{Minjin Choi}
\email{mathcmj89@gmail.com}
\affiliation{
Center for Quantum Information R\&D, Korea Institute of Science and Technology Information~(KISTI), Daejeon 34141, Republic of Korea
}

\date{\today}

\begin{abstract}
We propose a hybrid quantum-classical framework for the unambiguous discrimination of pure states by combining the noisy intermediate-scale quantum semidefinite program solver~(NSS) with the reciprocal-state representation.
In NSS, a given semidefinite program~(SDP) is represented within a subspace spanned by chosen ansatz states, with the required quantities estimated using quantum circuits, and the resulting optimization is performed classically.
This makes NSS naturally applicable to discrimination problems involving circuit-prepared states, as finding a discrimination measurement can be formulated as an SDP.
By choosing the discriminated states themselves as the ansatz and exploiting the reciprocal-state representation, the general matrix-variable NSS formulation is reduced to an optimization over $N$ scalar variables, where $N$ is the number of discriminated states.
The resulting formulation reduces the number of optimization variables and positive-semidefinite constraints compared with the original NSS formulation.
We further establish robustness guarantees against Gram matrix estimation errors, including a rescaling procedure that guarantees a valid positive operator-valued measure~(POVM) under bounded perturbations.
The resulting bounds quantify both the deviation in average success probability and violations of the no-error condition using only the estimated Gram matrix, its SDP solution, and a characterized perturbation bound, without access to the ideal Gram matrix.
Numerical simulations validate the derived robustness bounds and characterize finite-shot scaling. 
We also perform experiments on IBM quantum hardware to estimate pairwise overlaps, and then evaluate the performance of the resulting POVMs.
\end{abstract}


\maketitle

\section{Introduction}
\label{sec:introduction}

Quantum state discrimination is one of the fundamental tasks in quantum information theory.
The construction of a discrimination measurement depends on the criterion adopted for distinguishing nonorthogonal quantum states, which cannot be perfectly discriminated.
Several discrimination strategies have been developed for this purpose, including minimum-error discrimination and unambiguous state discrimination~(USD)~\cite{Barnett2009, Bergou2010, Bae2015}.
In particular, USD eliminates erroneous conclusive outcomes by allowing an additional inconclusive outcome, making it particularly attractive in applications where reliability is more important than always producing a decision.
The strategy was independently introduced by Ivanovic, Dieks, and Peres for two nonorthogonal pure states with equal prior probabilities \cite{Ivanovic1987, Dieks1988, Peres1988}, and was later extended to unequal priors \cite{Jaeger1995}. 
Subsequent works have extended USD to more general settings and multi-state ensembles~\cite{CheflesBarnett1998, Chefles1998, Sugimoto2010, Bergou2012}.
Recent studies have continued to explore USD in different contexts, including approximate unambiguous discrimination~\cite{Zhuang2020}, unambiguous discrimination of quantum operations~\cite{Cai2024}, and sequential unambiguous discrimination~\cite{Perez2025}.

One class of hybrid quantum-classical approaches uses a quantum processor to obtain problem-relevant quantities, which are subsequently processed classically~\cite{Bharti2022, Bharti2022nisq, Haug2022}. 
Within this approach, Bharti \textit{et al.}~\cite{Bharti2022} proposed the noisy intermediate-scale quantum~(NISQ) solver for semidefinite programs~(SDPs), referred to as the NISQ SDP solver~(NSS).
NSS represents an SDP within a subspace spanned by a chosen set of ansatz states.
The resulting SDP is specified by matrices constructed with respect to the ansatz states, whose entries can be estimated using quantum circuits, while the SDP is solved classically.
A key feature of the framework is the choice of ansatz, which determines how the optimization problem is represented and directly influences both the solution quality and the computational complexity.

The NSS framework is naturally applicable to quantum state discrimination, as finding a discrimination measurement can be formulated as an SDP.
In particular, when the states to be discriminated are prepared by quantum circuits, quantities required for the optimization can be estimated using their preparation circuits.
Indeed, the application of NSS to USD was previously illustrated by Bharti \textit{et al.}~\cite{Bharti2022} using the general NSS formulation.
More recently, Mohan \textit{et al.} proposed a related hybrid quantum-classical approach for Bayes-optimal quantum state discrimination~\cite{Mohan2026}.
While these works demonstrate the applicability of hybrid quantum-classical approaches to quantum state discrimination, their general formulations leave room for further simplification by exploiting the known structure of pure-state USD.
The effect of errors in the estimated quantities on the resulting discrimination measurement has also not been characterized.

In this work, we propose a hybrid quantum-classical framework for the unambiguous discrimination of pure states by combining NSS with the reciprocal-state representation of USD~\cite{Eldar2003}.
By choosing the discriminated states themselves as the ansatz and exploiting the reciprocal-state representation, the general matrix-variable NSS formulation reduces to an optimization over $N$ scalar variables, where $N$ is the number of discriminated states.
This reduction simplifies the classical postprocessing compared with the original NSS formulation.
We further analyze the effect of Gram matrix estimation errors on the resulting discrimination measurement.
For a bounded perturbation of the estimated Gram matrix, we introduce a rescaling procedure that guarantees a valid positive operator-valued measure~(POVM) and derive bounds on both the deviation of the average success probability and violations of the no-error condition.
These bounds are expressed in terms of quantities available from the estimated Gram matrix and its SDP solution together with a characterized perturbation bound, without requiring knowledge of the ideal Gram matrix.
We validate the robustness analysis numerically, characterize finite-shot estimation errors, and examine the framework through experiments on IBM quantum hardware.

This paper is organized as follows. 
In Sec.~\ref{sec:preliminaries}, we review the SDP formulation of USD and the NSS framework. 
In Sec.~\ref{sec:main01}, we present the proposed framework for the unambiguous discrimination of pure states.
In Sec.~\ref{sec:main02}, we analyze the robustness of the resulting discrimination measurement against Gram matrix estimation errors. 
In Sec.~\ref{sec:main03}, we present numerical simulations validating the robustness analysis and characterizing finite-shot estimation errors.
We also report experiments on IBM quantum hardware.
Finally, Sec.~\ref{sec:conclusion} concludes the paper.

\section{Preliminaries}
\label{sec:preliminaries}

\subsection{Unambiguous State Discrimination}
\label{sec:pre1}

Quantum state discrimination is the task of identifying which state was prepared from a given ensemble $\mathcal{S}=\{\rho_{k}\}$.
A general measurement strategy is described by a POVM with elements $\{M_{n}\}$ satisfying $M_{n} \succeq 0$ and $\sum_{n}M_{n}=\mathbb{I}$, where $\mathbb{I}$ is the identity operator.
When the states in $\mathcal{S}$ have mutually orthogonal supports, perfect discrimination can be achieved by a projective measurement onto those supports.
For nonorthogonal states, however, no measurement can perfectly discriminate the ensemble without error. 
Avoiding misclassification therefore requires allowing an inconclusive outcome to account for the ambiguous cases.
The strategy in which every conclusive outcome is required to be correct is known as USD~\cite{Ivanovic1987, Dieks1988, Peres1988, Chefles1998}.

Formally, suppose that the states $\{\rho_{k}\}_{k=1}^{N}$ are prepared with probabilities $\{\eta_{k}\}_{k=1}^{N}$, where $\sum_{k}\eta_{k}=1$. 
We assume that the support of each $\rho_{k}$ is not entirely contained in the combined support of the remaining states, so that every $\rho_{k}$ possesses a component which lies outside the support of the other states.
In USD, the POVM elements $M_{n}~(n=1, \dots, N)$ are constructed such that outcome $n$ uniquely identifies the prepared state as $\rho_{n}$.
Since $\mathrm{Tr}(\rho_{k} M_{n})$ gives the probability of obtaining outcome $n$ when $\rho_{k}$ is prepared, this unambiguous property is expressed as
\begin{equation}
\mathrm{Tr}(\rho_{k} M_{n})=0,\quad \forall k\neq n,~~k,n \in \{1, \dots, N\},
\end{equation}
ensuring that a conclusive outcome is never produced by an incorrect state.
The remaining probability is absorbed into an additional POVM element
\begin{equation}
M_{0}=\mathbb{I}-\sum_{n=1}^{N}M_{n} \succeq 0,
\end{equation}
which corresponds to an inconclusive outcome.

The optimal POVM maximizes the average conclusive success probability under these constraints, which can be cast as the following SDP:
\begin{align}
\label{eq:usd_sdp}
\max_{\{M_{n}\}}\quad&~\sum_{n=1}^{N}\eta_{n}\mathrm{Tr}(\rho_{n} M_{n}) \nonumber \\
\text{s.t.}\quad& \mathrm{Tr}(\rho_{k} M_{n})=0,~~ \forall k \neq n, ~~k,n \in \{1, \dots, N\}, \nonumber\\
&\sum_{n=0}^{N} M_{n}=\mathbb{I},\quad  M_{n}\succeq0, ~~ \forall n \in \{0, 1, \dots, N\}.
\end{align}
This SDP serves as the starting point for the NSS-based formulation discussed in Sec.~\ref{sec:main01}.

\subsection{NISQ SDP Solver}
\label{sec:pre2}

NSS is a hybrid quantum-classical algorithm for SDPs~\cite{Bharti2022}, particularly suited to problems in which the quantities defining the SDP can be efficiently estimated using quantum circuits.
Let us consider the standard SDP of the form
\begin{align}
\label{eq:stdSDP}
\min\quad&~\mathrm{Tr}(CX) \nonumber \\
\text{s.t.}\quad& \mathrm{Tr}(A_{i}X)=b_{i},~~ \forall i \in \{1, \dots, m\}, \quad X \succeq 0,
\end{align}
where $C$ and $A_{i}~(1\le i \le m)$ are Hermitian matrices, and $b_{i}$ is the $i$-th element of a vector $b \in \mathbb{R}^{m}$.
In NSS, a fixed set of quantum states $\mathcal{A}=\{\ket{\psi_{i}}\}$, called the ansatz states, is prepared by quantum circuits, and the optimization variable $X$ is restricted to the operator subspace spanned by $\{\ket{\psi_{i}}\bra{\psi_{j}}\}$ via the hybrid operator
\begin{equation}
\label{eq:hybrid_operator}
X_{\mathcal{A}}=\sum_{i,j}\gamma_{i,j}\ket{\psi_{i}}\bra{\psi_{j}},
\end{equation}
where $\gamma=(\gamma_{i,j})$ is a positive semidefinite coefficient matrix.
For each operator $O$ in Eq.~\eqref{eq:stdSDP}~(i.e., $C$ and the $A_{i}$), $\mathrm{Tr}(OX_{\mathcal{A}})$ is given by $\sum_{i,j}\gamma_{i,j}\bra{\psi_{j}}O\ket{\psi_{i}}$, where the matrix elements $\bra{\psi_{j}}O\ket{\psi_{i}}$ are estimated using quantum circuits.
Together with the condition $\gamma \succeq 0$, which ensures $X_{\mathcal{A}} \succeq 0$, the original SDP becomes an SDP over the parameter matrix $\gamma$, whose size is determined by the number of ansatz states.

NSS differs in an important respect from variational quantum algorithms~(VQAs)~\cite{Cerezo2021}, another well-known class of hybrid quantum-classical algorithms.
In VQAs, the classical optimization variable is the parameter vector $\bm{\theta}$ of a quantum circuit $U(\bm{\theta})$, and the objective is a generally nonconvex function of $\bm{\theta}$, leading to optimization landscapes with local minima that may be far from the global optimum~\cite{Bittel2021, Larocca2025}.
In NSS, by contrast, the optimization is convex, thereby avoiding the issue of local minima.
When the ansatz set $\{\ket{\psi_{i}}\}$ is well chosen, NSS can reduce the original SDP to a smaller convex problem that can be solved efficiently.
However, this efficiency critically depends on the choice of $\{\ket{\psi_{i}}\}$.
The ansatz subspace must be expressive enough to contain (or closely approximate) an optimal solution. 
Otherwise, NSS can only recover a suboptimal approximation to the solution.
The choice of ansatz states is therefore a central practical issue in any application of NSS.


\section{NSS-Based Framework for Unambiguous Discrimination of Pure States}
\label{sec:main01}

\begin{figure}[t]
\centering
\includegraphics[width=0.47\textwidth]{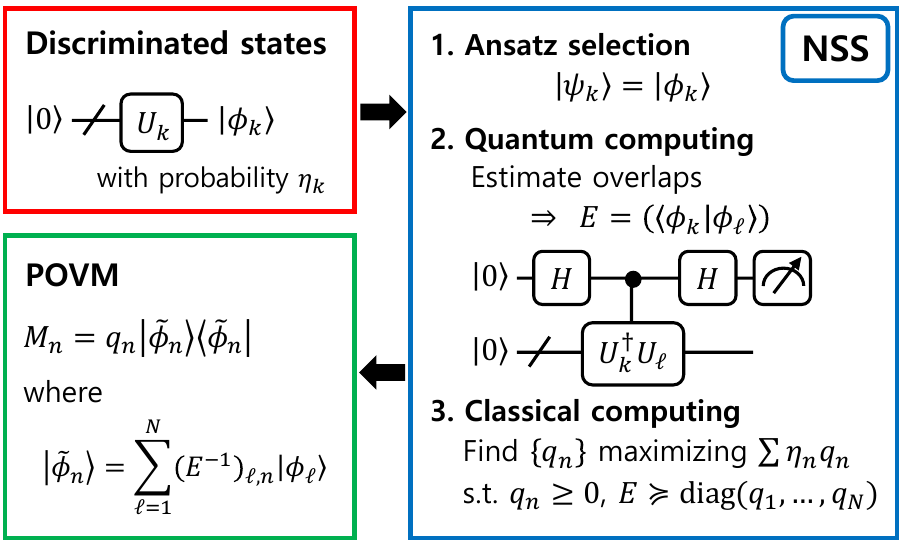}
\caption{
Hybrid quantum-classical workflow for NSS-based pure-state unambiguous discrimination. 
Discriminated states $\{\ket{\phi_{k}}\}$~(top left), prepared by $\{U_{k}\}$ with probabilities $\{\eta_{k}\}$, are processed within the NSS framework~(right) in three steps.
(1) The ansatz states are set to coincide with the discriminated states, $\ket{\psi_{k}}=\ket{\phi_{k}}$. 
(2) The Gram matrix $E$, with entries $E_{k,\ell}=\langle\phi_{k}|\phi_{\ell}\rangle$, is estimated via quantum circuits.
The illustrated Hadamard-test circuit estimates the real part of the overlap.
(3) The reduced SDP in Eq.~(\ref{eq:NSS_SDP_reduced}) is solved classically to obtain the scalars $\{q_{n}\}$.
The optimal POVM~(bottom left) is constructed from $E$ and $\{q_{n}\}$ through Eqs.~(\ref{eq:reciprocalstate}) and (\ref{eq:reciprocalPOVM}).
}
\label{Figure01}
\end{figure}

We consider the problem of unambiguously discriminating between pure states $\{\ket{\phi_{k}}\}_{k=1}^{N}$.
As shown in Sec.~\ref{sec:pre1}, the optimal POVM can be obtained by solving the SDP in Eq.~(\ref{eq:usd_sdp}).
When classical vector representations of the states $\ket{\phi_{k}}$ are available, the quantities required to formulate this SDP can be directly computed.
However, when the states are specified only through their preparation circuits, obtaining such vector representations becomes increasingly difficult as the system size grows.
This setting is common in quantum information processing, where states are routinely generated as outputs of quantum circuits rather than provided as explicit vector representations.
In such settings, a hybrid approach that estimates the required quantities using quantum circuits avoids the need to explicitly reconstruct the state vectors, providing a natural setting for applying the NSS framework.

Indeed, Bharti \textit{et al.}~\cite{Bharti2022} previously applied the NSS framework to USD by parameterizing each conclusive POVM element as a hybrid operator analogous to Eq.~(\ref{eq:hybrid_operator}),
\begin{equation}
\label{eq:hybrid_povm}
M_{n}=\sum_{i,j}\gamma_{i,j}^{(n)}\ket{\psi_{i}}\bra{\psi_{j}}, \quad n=1,\dots,N,
\end{equation}
where $\{\ket{\psi_{i}}\}$ is a fixed set of ansatz states and $\gamma^{(n)}=(\gamma_{i,j}^{(n)}) \succeq 0$.
Suppose that $\rho_{k}=\ket{\phi_{k}}\bra{\phi_{k}}$ admits the hybrid representation
\begin{equation}
\rho_{k}=\sum_{i,j}\beta_{i,j}^{(k)}\ket{\psi_{i}}\bra{\psi_{j}},
\end{equation}
with $\beta^{(k)}=(\beta_{i,j}^{(k)}) \succeq 0$.
The resulting NSS problem for USD reads
\begin{align}
\label{eq:NSS_SDP_general}
\max_{\{\gamma^{(n)}\}}\quad&\sum_{n=1}^{N}\eta_{n} \mathrm{Tr}(E\beta^{(n)}E\gamma^{(n)})\nonumber \\
\text{s.t.}\quad& {\rm Tr}(E\beta^{(k)}E\gamma^{(n)})=0~~ \forall k \neq n, ~~k,n \in \{1, \dots, N\}, \nonumber\\
&E-\sum_{n=1}^{N}E\gamma^{(n)}E\succeq0, \quad \gamma^{(n)}\succeq0, ~~ \forall n \in \{1, \dots, N\},
\end{align}
where $E=(E_{i,j})$, with $E_{i,j}=\langle\psi_{i}|\psi_{j}\rangle$, is the Gram matrix of the ansatz states, whose entries are estimated using quantum circuits.
The optimal USD measurement can be represented within the subspace spanned by the discriminated states~\cite{Chefles1998}.
Hence, if the ansatz subspace contains the span of the discriminated states, solving Eq.~(\ref{eq:NSS_SDP_general}) can recover the optimal POVM.
In this formulation, each conclusive POVM element is treated as a matrix-valued optimization variable.

While Eq.~(\ref{eq:NSS_SDP_general}) provides a general NSS-based formulation of pure-state USD, the known structure of the optimal USD measurement for linearly independent pure states allows this formulation to be simplified through a problem-specific choice of ansatz.
We therefore take the discriminated states themselves as the ansatz, that is, $\ket{\psi_{k}}=\ket{\phi_{k}}$ for $1 \le k \le N$.
Throughout this work, we assume the discriminated states to be linearly independent, equivalently, that their Gram matrix $E$ is positive definite.
This ensures that all states can be unambiguously discriminated and that $E$ is invertible.
With this choice of ansatz, the hybrid representation of $\rho_{k}=\ket{\phi_{k}}\bra{\phi_{k}}$ reduces to $\beta^{(k)}=e_{k} e_{k}^{\dagger}$, where $e_{k}$ denotes the $k$-th standard basis vector.
Linear independence further guarantees the existence of a uniquely determined dual basis within the span of the discriminated states, called the reciprocal states $\{\ket{\tilde{\phi}_{k}}\}_{k=1}^{N}$, defined by the biorthogonality relations $\langle\tilde{\phi}_{k}|\phi_{\ell}\rangle=\delta_{k,\ell}$ for $1 \le k,\ell \le N$, equivalently
\begin{align}
\label{eq:reciprocalstate}
\ket{\tilde{\phi}_{k}}=\sum_{\ell=1}^{N}(E^{-1})_{\ell,k}\ket{\phi_{\ell}}.
\end{align}
It is well established that the conclusive elements of an optimal USD POVM take the form~\cite{Eldar2003}
\begin{equation}
\label{eq:reciprocalPOVM}
M_{n}=q_{n}\ket{\tilde{\phi}_{n}}\bra{\tilde{\phi}_{n}}, \quad n=1,\dots,N,
\end{equation}
where $\{q_{n}\}_{n=1}^{N}$ are non-negative scalars to be optimized.

Substituting the reciprocal-state POVM in Eq.~(\ref{eq:reciprocalPOVM}) into the general NSS formulation in Eq.~(\ref{eq:NSS_SDP_general}) yields several simplifications.
The POVM coefficient matrices reduce to
\begin{equation}
\gamma^{(n)}=q_{n}E^{-1}e_{n} e_{n}^{\dagger}E^{-1},
\end{equation}
each of which has rank at most one, with positivity $\gamma^{(n)} \succeq 0$ ensured by $q_{n} \ge 0$.
With this form, the no-error condition ${\rm Tr}(E\beta^{(k)}E\gamma^{(n)})=0$ for $k \neq n$ holds automatically, and each term $\mathrm{Tr}(E\beta^{(n)}E\gamma^{(n)})$ in the objective function then evaluates to $q_{n}$.
The completeness constraint $E-\sum_{n=1}^{N}E\gamma^{(n)}E\succeq0$ becomes
\begin{equation}
E- {\rm diag}(q_1,\ldots,q_N) \succeq 0.
\end{equation}
The resulting SDP, which coincides with the established formulation in Ref.~\cite{Eldar2003}, is therefore
\begin{align}
\label{eq:NSS_SDP_reduced}
    \max_{\{q_{n}\}}\quad& \sum_{n=1}^{N}\eta_{n} q_{n} \nonumber\\
    \text{s.t.}\quad& q_{n}\ge 0, \quad n \in \{1, \dots, N\}, \nonumber\\
    & E- {\rm diag}(q_1,\ldots,q_N) \succeq 0.
\end{align}
Based on this formulation, we propose an NSS-based hybrid framework for pure-state unambiguous discrimination that requires only the Gram matrix of the discriminated states as input to the classical optimization.
The Gram matrix $E$ is estimated on a quantum device via overlap measurements between pairs of discriminated states, while the SDP in Eq.~(\ref{eq:NSS_SDP_reduced}) is solved classically to determine $\{q_{n}\}$. 
The discrimination POVM is then determined through Eqs.~(\ref{eq:reciprocalstate})--(\ref{eq:reciprocalPOVM}).
The resulting hybrid quantum-classical workflow is summarized in Fig.~\ref{Figure01}.

Compared with the original NSS-based USD formulation, the reduced formulation substantially simplifies the classical optimization.
With the discriminated states used as the ansatz in both formulations, the original formulation in Eq.~\eqref{eq:NSS_SDP_general} involves $N$ positive-semidefinite $N \times N$ matrix variables, together with $N+1$ positive-semidefinite constraints and $N(N-1)$ no-error equalities.
In contrast, Eq.~\eqref{eq:NSS_SDP_reduced} involves only $N$ nonnegative scalar variables and a single $N \times N$ positive-semidefinite constraint.
For solvers that enforce positive semidefiniteness through eigenvalue decompositions, processing each $N \times N$ matrix costs $O(N^{3})$.
The original formulation requires this operation for $N+1$ matrices per iteration, whereas the reduced formulation requires it for only one, reducing this part of the computational cost from $O(N^4)$ to $O(N^3)$.
Computing $E^{-1}$ to obtain the reciprocal-state coefficients requires a one-time $O(N^{3})$ calculation, the same order as processing a single matrix.
Although the total runtime depends on the solver and the number of iterations, the reduced formulation yields a smaller classical optimization problem while preserving the optimal USD success probability for an exact Gram matrix.

\section{Noise Analysis}
\label{sec:main02}

In NSS-based formulations of USD, the Gram matrix $E$ of the ansatz states is supplied to the classical solver from the quantum side of the workflow.
The accuracy with which $E$ is estimated therefore directly affects the quality of the resulting POVM and, in turn, the achievable success probability.
Even in the absence of hardware-induced circuit noise, the overlaps $E_{k,\ell}=\langle\phi_{k}|\phi_{\ell}\rangle$ are estimated using a finite number of measurement shots and are therefore subject to statistical error.
In this section, we analyze how such a perturbation in $E$ propagates through the workflow.
We first construct a valid POVM from the estimated Gram matrix under a bounded perturbation and then analyze the robustness of its discrimination performance.

Suppose that the quantum process in the NSS yields a noisy Gram matrix of the form
\begin{align}
E_{noisy} = E_{ideal} + \Delta E,
\end{align}
with $\Vert \Delta E \Vert_{2} \le \varepsilon$ for some $0 < \varepsilon \le 1$, where $\Vert\cdot\Vert_{2}$ denotes the operator norm for matrices and the Euclidean norm for vectors.
In constructing $E_{noisy}$, the diagonal entries are fixed to unity without measurement, while the upper-triangular off-diagonal entries are estimated from the quantum device.
The corresponding lower-triangular entries are then set to their complex conjugates, thereby enforcing Hermiticity.
Thus, $\Delta E$ is Hermitian with zero diagonal.

For circuit-prepared states, the exact Gram matrix $E_{ideal}$ may not be classically available, and $E_{noisy}$ is therefore the practically accessible quantity.
In such a setting, an upper bound on the individual overlap-estimation errors may be empirically characterized for a given quantum device and measurement setting and used to obtain a corresponding bound $\varepsilon$ on $\Vert \Delta E \Vert_{2}$.
For example, if each off-diagonal overlap has an estimation error bounded by $\delta$, then $\Vert \Delta E \Vert_{2} \le (N-1)\delta$ provides a simple conservative choice of $\varepsilon$.

Let $\hat{\lambda}_{min}$ be the minimum eigenvalue of $E_{noisy}$.
Throughout this section, we assume $\hat{\lambda}_{min} > \varepsilon$. 
This ensures that $E_{noisy}$ is positive definite and, by Weyl's inequality~\cite{Horn2012}, that $E_{ideal}$ is also positive definite.
Let $\{\hat{q}_{n}\}$ denote an optimizer of the SDP in Eq.~\eqref{eq:NSS_SDP_reduced} with $E=E_{noisy}$.
We first construct the positive operators
\begin{equation}
\label{eq:noisyPOVM}
\widetilde{M}_{n}=\hat{q}_{n}\ket{\tilde{\phi}_{n}^{noisy}}\bra{\tilde{\phi}_{n}^{noisy}}, \quad n=1,\dots,N,
\end{equation}
where
\begin{equation}
\ket{\tilde{\phi}_{k}^{noisy}}=\sum_{\ell=1}^{N}(E_{noisy}^{-1})_{\ell,k}\ket{\phi_{\ell}}.
\end{equation}
Because the estimated Gram matrix $E_{noisy}$ differs from the ideal Gram matrix $E_{ideal}$, these operators need not satisfy $\sum_{n=1}^{N}\widetilde{M}_{n} \preceq \mathbb{I}$.
We therefore define
\begin{align}
&\widehat{M}_{n} = \frac{\hat{\lambda}_{min}}{\hat{\lambda}_{min}+\varepsilon}\widetilde{M}_{n}, \quad n=1,\dots,N, \nonumber \\
&\widehat{M}_{0} = \mathbb{I}-\sum_{n=1}^{N}\widehat{M}_{n}.
\end{align}
This rescaling guarantees a valid POVM.
To see this, let $D=\mathrm{diag}(\hat{q}_{1},\ldots,\hat{q}_{N})$ and $\Phi=(\ket{\phi_{1}}, \dots, \ket{\phi_{N}})$.
The SDP constraint $D \preceq E_{noisy}$ gives
\begin{equation}
\sum_{n=1}^{N} \widetilde{M}_{n}=\Phi E_{noisy}^{-1}D E_{noisy}^{-1} \Phi^{\dagger} \preceq \Phi E_{noisy}^{-1} \Phi^{\dagger}.
\end{equation}
Since $\Phi^{\dagger}\Phi=E_{ideal} \preceq E_{noisy}+ \varepsilon \mathbb{I}$, we have
\begin{equation}
\left\Vert \Phi E_{noisy}^{-1} \Phi^{\dagger} \right\Vert_{2}= \left\Vert E_{noisy}^{-1/2} E_{ideal} E_{noisy}^{-1/2} \right\Vert_{2} \le 1+\frac{\varepsilon}{\hat{\lambda}_{min}}.
\end{equation}
Thus $\sum_{n=1}^{N}\widehat{M}_{n} \preceq \mathbb{I}$, so that $\widehat{M}_{0} \succeq 0$ and $\{\widehat{M}_{n}\}_{n=0}^{N}$ defines a valid POVM.
Although this rescaling guarantees a valid POVM, the no-error conditions need not hold exactly because of Gram-matrix estimation errors.
Theorem~\ref{Theorem2} provides upper bounds on the resulting misidentification probabilities.

We now quantify how the Gram-matrix perturbation affects the optimization underlying this construction.
For any positive semidefinite matrix $E$, define $Q(E)$ as the optimal value of the SDP in Eq.~(\ref{eq:NSS_SDP_reduced}),
\begin{equation}
\label{eq:def_Q}
Q(E)=\max_{q_{n}\ge 0}\left\{\sum_{n=1}^{N}\eta_{n} q_{n} \;\bigg|\; E-\mathrm{diag}(q_{1},\ldots,q_{N})\succeq 0\right\}.
\end{equation}
Then $Q(E_{ideal})$ is the optimal SDP value under ideal conditions, whereas $Q(E_{noisy})$ is the value obtained from the estimated Gram matrix.
Since $E_{ideal}$ may not be available in practice, we express the following bound in terms of quantities accessible from $E_{noisy}$ and the perturbation bound $\varepsilon$.
\begin{Lem}
\label{lemma1}
Suppose that $\Vert \Delta E \Vert_{2} \le \varepsilon$ and $\hat{\lambda}_{min}> \varepsilon$.
Then
\begin{equation}
\left| Q(E_{noisy})-Q(E_{ideal}) \right| \leq \frac{\varepsilon}{\hat{\lambda}_{min}}Q(E_{noisy}).    
\end{equation}
\end{Lem}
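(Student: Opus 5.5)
The approach is a two-sided comparison. It rests on two facts: $Q$ is monotone under the Loewner order, and $Q$ is positively homogeneous, i.e. $Q(cE)=cQ(E)$ for $c>0$. Monotonicity holds because if $E\preceq E'$, any feasible $\{q_n\}$ for $E$ satisfies $E'-\mathrm{diag}(q)\succeq E-\mathrm{diag}(q)\succeq 0$, so it is feasible for $E'$. Homogeneity holds because $q\mapsto cq$ is a bijection between the feasible sets for $E$ and for $cE$, and it scales the objective by $c$. The plan is to sandwich $E_{ideal}$ between two scalar multiples of $E_{noisy}$ and then transfer these inequalities to $Q$.

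For the sandwich, combine $\Vert\Delta E\Vert_2\le\varepsilon$ with $E_{noisy}\succeq\hat\lambda_{min}\mathbb{I}$. This gives $\varepsilon\mathbb{I}\preceq(\varepsilon/\hat\lambda_{min})E_{noisy}$. Since $-\varepsilon\mathbb{I}\preceq\Delta E\preceq\varepsilon\mathbb{I}$, we obtain
\begin{equation*}
\Bigl(1-\frac{\varepsilon}{\hat\lambda_{min}}\Bigr)E_{noisy}\preceq E_{noisy}-\varepsilon\mathbb{I}\preceq E_{ideal}\preceq E_{noisy}+\varepsilon\mathbb{I}\preceq\Bigl(1+\frac{\varepsilon}{\hat\lambda_{min}}\Bigr)E_{noisy}.
\end{equation*}
The assumption $\hat\lambda_{min}>\varepsilon$ makes the lower coefficient positive, so homogeneity applies to it. It also guarantees that $E_{ideal}$ is positive definite, as noted before the lemma.

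Applying monotonicity and then homogeneity to each side yields
\begin{equation*}
\Bigl(1-\frac{\varepsilon}{\hat\lambda_{min}}\Bigr)Q(E_{noisy})\le Q(E_{ideal})\le\Bigl(1+\frac{\varepsilon}{\hat\lambda_{min}}\Bigr)Q(E_{noisy}),
\end{equation*}
which rearranges directly to the claimed bound $|Q(E_{noisy})-Q(E_{ideal})|\le(\varepsilon/\hat\lambda_{min})Q(E_{noisy})$.

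There is no real obstacle here. The only step requiring care is converting the additive perturbation $\pm\varepsilon\mathbb{I}$ into the multiplicative factors $1\pm\varepsilon/\hat\lambda_{min}$, and this uses only $\mathbb{I}\preceq E_{noisy}/\hat\lambda_{min}$. Two further points should be checked explicitly. First, the maximum in Eq.~\eqref{eq:def_Q} is attained, since the feasible set is compact: $0\le q_n\le E_{n,n}$. Second, the nonnegativity constraints $q_n\ge 0$ are preserved under positive scaling, which is needed for the homogeneity argument.
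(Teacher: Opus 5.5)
Your proof is correct and follows the paper's route: you sandwich $E_{ideal}$ between $(1\mp\varepsilon/\hat{\lambda}_{min})E_{noisy}$ using $\varepsilon\mathbb{I}\preceq(\varepsilon/\hat{\lambda}_{min})E_{noisy}$, then apply the monotonicity and positive homogeneity of $Q$. Your explicit justifications of monotonicity, homogeneity, and attainment of the maximum fill in details that the paper simply asserts.
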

\begin{proof}
Let $r=\varepsilon/\hat{\lambda}_{min}$.
Then $0 < r < 1$.
Since $E_{noisy} \succeq \hat{\lambda}_{min}\mathbb{I}$ and $\Vert \Delta E \Vert_{2} \le \varepsilon$, we have
\begin{equation}
-rE_{noisy} \preceq \Delta E \preceq rE_{noisy}.
\end{equation}
Using $E_{ideal}=E_{noisy}-\Delta E$, it follows that
\begin{equation}
\label{eq:lem_positivity_ineq}
(1-r)E_{noisy} \preceq E_{ideal} \preceq (1+r)E_{noisy}.
\end{equation}
Since $Q(E)$ is monotone with respect to the order of positive-semidefinite matrices and satisfies $Q(cE)=cQ(E)$ for $c>0$, Eq.~\eqref{eq:lem_positivity_ineq} gives
\begin{equation}
(1-r)Q(E_{noisy}) \le Q(E_{ideal}) \le (1+r)Q(E_{noisy}).
\end{equation}
Therefore,
\begin{equation}
\left| Q(E_{noisy})-Q(E_{ideal}) \right| \leq rQ(E_{noisy}),    
\end{equation}
which completes the proof.
\end{proof}

We define the average success probability for the constructed POVM $\{\widehat{M}_{n}\}$ as
\begin{equation}
P(E_{noisy})=\sum_{n=1}^{N}\eta_{n}\langle \phi_{n}| \widehat{M}_{n} | \phi_{n} \rangle.
\end{equation}
For the ideal benchmark, we use the unscaled optimal USD POVM constructed from $E_{ideal}$, so that $P(E_{ideal})=Q(E_{ideal})$.
The following theorem quantifies the effect of Gram-matrix perturbations on both the average success probability and violations of the no-error condition of the resulting POVM.
\begin{Thm}
\label{Theorem2}
Suppose that $\Delta E$ is Hermitian with zero diagonal, $\Vert \Delta E \Vert_{2} \le \varepsilon$ and $\hat{\lambda}_{min}> \varepsilon$.
Let 
\begin{equation}
\mu=\frac{\hat{\lambda}_{min}}{\hat{\lambda}_{min}+\varepsilon}, \quad \nu=\frac{\varepsilon}{\hat{\lambda}_{min}}Q(E_{noisy}),
\end{equation}
and, for $n,k \in \{1, \dots, N\}$, define
\begin{equation}
s_{n,k}=\sqrt{\left\Vert E_{noisy}^{-1}e_{n} \right\Vert_{2}^{2} - \left|(E_{noisy}^{-1})_{n,k}\right|^{2}},
\end{equation}
where $e_n$ is the $n$-th standard basis vector.
Then
\begin{align}
\label{eq:deviationbound}
&\left| P(E_{noisy})-P(E_{ideal}) \right| \nonumber \\
&\leq \mu(\Gamma+\nu)+(1-\mu)\min\{1, Q(E_{noisy})+\nu\},
\end{align}
where
\begin{align}
&\Gamma = 2\varepsilon\sum_{n=1}^{N}\eta_{n}\hat{q}_{n}s_{n,n}+\varepsilon^{2}\sum_{n=1}^{N}\eta_{n}\hat{q}_{n}s_{n,n}^{2}.
\end{align}
Moreover, for $k \neq n$, the probability of incorrectly assigning outcome $n$ to state $\ket{\phi_{k}}$ satisfies
\begin{equation}
\label{eq:errorbound}
\langle \phi_{k}| \widehat{M}_{n} | \phi_{k} \rangle \le \mu\varepsilon^{2}\hat{q}_{n}s_{n,k}^{2}.
\end{equation}
\end{Thm}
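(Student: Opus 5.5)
The plan is to compute the matrix elements $\langle\phi_k|\widetilde M_n|\phi_k\rangle$ exactly in terms of $E_{noisy}^{-1}$ and $\Delta E$. The key identity uses $\Phi^\dagger\Phi=E_{ideal}=E_{noisy}-\Delta E$:
\begin{equation*}
\langle\phi_k|\tilde\phi_n^{noisy}\rangle=\sum_\ell (E_{noisy}^{-1})_{\ell,n}(E_{ideal})_{k,\ell}=\left(E_{ideal}E_{noisy}^{-1}\right)_{k,n}=\delta_{k,n}-\left(\Delta E\,E_{noisy}^{-1}\right)_{k,n}.
\end{equation*}
Writing $v_n=E_{noisy}^{-1}e_n$, this becomes $\langle\phi_k|\tilde\phi_n^{noisy}\rangle=\delta_{k,n}-e_k^\dagger\Delta E\,v_n$. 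Because $\Delta E$ has zero diagonal, the row vector $e_k^\dagger\Delta E$ has a zero in its $k$-th entry. Hence $e_k^\dagger\Delta E\,v_n=e_k^\dagger\Delta E\,v_n^{(k)}$, where $v_n^{(k)}$ is $v_n$ with its $k$-th component set to zero. Its norm is $\|v_n^{(k)}\|_2=s_{n,k}$. Cauchy--Schwarz with $\|e_k^\dagger\Delta E\|_2\le\|\Delta E\|_2\le\varepsilon$ then gives
\begin{equation*}
|e_k^\dagger\Delta E\,v_n|\le\varepsilon s_{n,k}.
\end{equation*}
This is the one place where the zero-diagonal hypothesis enters, and it explains why $s_{n,k}$ omits the $|(E_{noisy}^{-1})_{n,k}|^2$ term.

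The no-error bound \eqref{eq:errorbound} follows at once. For $k\neq n$,
\begin{equation*}
\langle\phi_k|\widehat M_n|\phi_k\rangle=\mu\hat q_n|\langle\phi_k|\tilde\phi_n^{noisy}\rangle|^2\le\mu\hat q_n\varepsilon^2 s_{n,k}^2.
\end{equation*}
For the diagonal case, $\langle\phi_n|\tilde\phi_n^{noisy}\rangle=1-z_n$ with $|z_n|\le\varepsilon s_{n,n}$. Then
\begin{equation*}
\bigl||1-z_n|^2-1\bigr|\le 2|z_n|+|z_n|^2\le 2\varepsilon s_{n,n}+\varepsilon^2s_{n,n}^2.
\end{equation*}
Weighting by $\eta_n\hat q_n$ and summing, the unscaled success probability $\widetilde P=\sum_n\eta_n\langle\phi_n|\widetilde M_n|\phi_n\rangle$ satisfies $|\widetilde P-Q(E_{noisy})|\le\Gamma$. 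This uses $\sum_n\eta_n\hat q_n=Q(E_{noisy})$ by optimality of $\{\hat q_n\}$.

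For the deviation bound \eqref{eq:deviationbound}, I use $P(E_{noisy})=\mu\widetilde P$ and $P(E_{ideal})=Q(E_{ideal})$ and split
\begin{equation*}
P(E_{noisy})-P(E_{ideal})=\mu\bigl(\widetilde P-Q(E_{ideal})\bigr)-(1-\mu)Q(E_{ideal}).
\end{equation*}
The first term is bounded by $\mu(\Gamma+\nu)$, combining the estimate above with Lemma~\ref{lemma1}, which gives $|Q(E_{noisy})-Q(E_{ideal})|\le\nu$. For the second term, $Q(E_{ideal})\le Q(E_{noisy})+\nu$, again by Lemma~\ref{lemma1}. Also $Q(E_{ideal})\le1$, since it is the success probability of a genuine POVM with $\sum_n\eta_n=1$. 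Hence the second term is at most $(1-\mu)\min\{1,Q(E_{noisy})+\nu\}$, and the triangle inequality finishes the proof.

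The main obstacle is the bookkeeping in the exact overlap identity: getting the index order of $E_{ideal}E_{noisy}^{-1}$ right (Hermiticity may be needed to transpose) and justifying that the zero diagonal lets one drop the $k$-th component of $v_n$. Everything else is triangle inequalities plus Lemma~\ref{lemma1}.
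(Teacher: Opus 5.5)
Your proposal is correct and follows essentially the same argument as the paper. You use $\langle\phi_k|\tilde{\phi}_n^{noisy}\rangle=\delta_{k,n}-(\Delta E\,E_{noisy}^{-1})_{k,n}$, which is just the complex conjugate of the paper's $\delta_{n,k}-B_{n,k}$ with $B=E_{noisy}^{-1}\Delta E$, and then obtain $\varepsilon s_{n,k}$ from the zero diagonal plus Cauchy--Schwarz, and the deviation bound from the same triangle-inequality split with Lemma~\ref{lemma1} and $P(E_{ideal})=Q(E_{ideal})\le 1$.
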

\begin{proof}
Substituting $E_{ideal}=E_{noisy}-\Delta{E}$ gives
\begin{align}
\langle\tilde{\phi}_{n}^{noisy}|\phi_{k}\rangle 
&= \left(E_{noisy}^{-1} \cdot E_{ideal}\right)_{n,k} \nonumber \\
&= \delta_{n,k} - \left(E_{noisy}^{-1} \cdot \Delta E\right)_{n,k}.
\end{align}
Define $B = E_{noisy}^{-1} \Delta E$.
Then
\begin{align}
\langle \phi_{n}| \widehat{M}_{n} | \phi_{n} \rangle
&= \mu\hat{q}_{n} \left|1-B_{n,n}\right|^{2} \nonumber \\
&= \mu(\hat{q}_{n} - 2\hat{q}_{n}{\rm Re}(B_{n,n})+\hat{q}_{n}| B_{n,n}|^{2}),
\end{align}
and therefore
\begin{align}
P(E_{noisy})=& \mu Q(E_{noisy}) \nonumber \\
&- 2\mu\sum_{n=1}^{N}\eta_{n}\hat{q}_{n}{\rm Re}(B_{n,n})+\mu\sum_{n=1}^{N}\eta_{n}\hat{q}_{n}| B_{n,n}|^{2}.
\end{align}
Since $(\Delta E)_{i,i}=0$ for all $i \in \{1, 2, \dots, N\}$, we obtain
\begin{equation}
B_{n,k}=\sum_{j \neq k}(E^{-1}_{noisy})_{n,j}(\Delta E)_{j,k}.
\end{equation}
By the Cauchy-Schwarz inequality,
\begin{align}
|B_{n,k}| &\le \left(\sum_{j \neq k}|(E^{-1}_{noisy})_{n,j}|^{2}\right)^{1/2}\cdot\left(\sum_{j \neq k}|(\Delta E)_{j,k}|^{2}\right)^{1/2} \nonumber \\
&=s_{n,k} \Vert \Delta E e_{k} \Vert_{2} \nonumber \\
&\le s_{n,k}\varepsilon,
\end{align}
where the equality follows from the Hermiticity of $E_{noisy}^{-1}$ and $(\Delta E)_{k,k}=0$, and the last inequality follows from $\Vert \Delta E e_{k} \Vert_{2} \le \Vert \Delta E\Vert_{2} \Vert e_{k} \Vert_{2}$ and $\Vert \Delta E \Vert_{2} \le \varepsilon$.
Hence,
\begin{align}
\label{eq:thm_eq1}
&\left| P(E_{noisy})-\mu Q(E_{noisy})\right| \le \mu\Gamma.
\end{align}
Combining Eq.~(\ref{eq:thm_eq1}) with Lemma~\ref{lemma1} and $P(E_{ideal})=Q(E_{ideal})$ gives
\begin{align}
\label{eq:thm2_ineq1_bound}
&\left| P(E_{noisy})-P(E_{ideal})\right| \nonumber \\
&\le \left| P(E_{noisy})-\mu Q(E_{noisy})\right| + \mu\left| Q(E_{noisy})-Q(E_{ideal})\right| \nonumber \\
& \quad +(1-\mu)P(E_{ideal})\nonumber \\
&\le \mu(\Gamma+\nu)+(1-\mu)P(E_{ideal}).
\end{align}
Since Lemma~\ref{lemma1} and $P(E_{ideal})\le 1$ imply 
\begin{equation}
P(E_{ideal})\le \min\{1, Q(E_{noisy})+\nu\},
\end{equation}
Eq.~\eqref{eq:thm2_ineq1_bound} yields the inequality in Eq.~\eqref{eq:deviationbound}.
We also note that, for $k \neq n$,
\begin{equation}
\langle \phi_{k}| \widehat{M}_{n} | \phi_{k} \rangle = \mu\hat{q}_{n} \left|B_{n,k}\right|^{2} \le \mu\varepsilon^{2}\hat{q}_{n}s_{n,k}^{2}.
\end{equation}
\end{proof}

\begin{figure*}[t]
\centering
\includegraphics[width=0.9\textwidth]{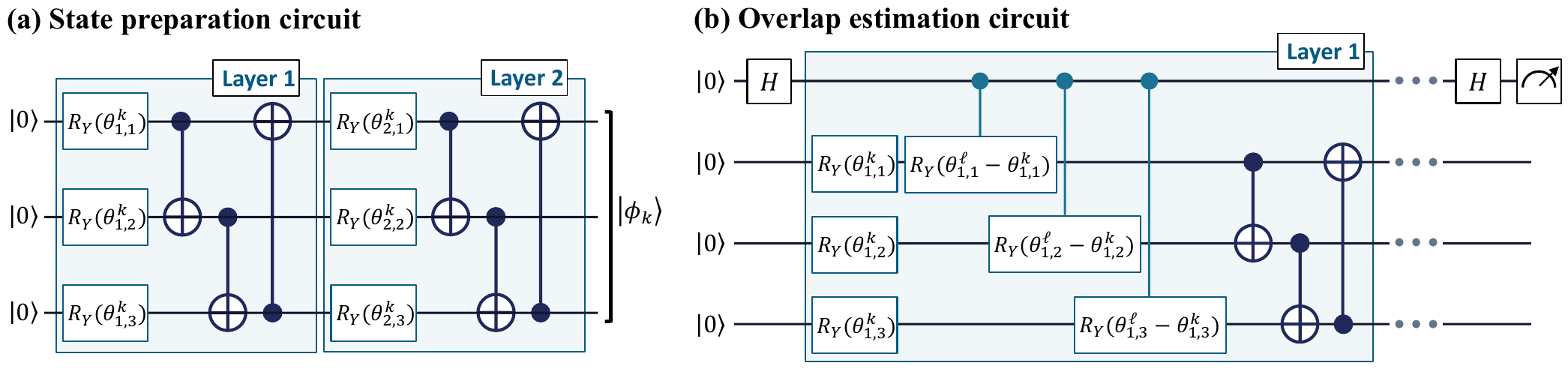}
\caption{
Quantum circuits used in the numerical simulations.
(a) State preparation circuit for generating the discriminated state $\ket{\phi_{k}}$ using a three-qubit, two-layer HEA consisting of $R_{Y}$ rotations and $CNOT$ gates.
The prepared states are real-valued, and thus their pairwise overlaps $E_{k, \ell}=\langle\phi_{k}|\phi_{\ell}\rangle$ are real.
(b) Overlap estimation circuit for computing $E_{k, \ell}$, obtained from the measurement outcome on the ancilla.
}
\label{Figure02}
\end{figure*}

The bounds in Theorem~\ref{Theorem2} can be evaluated using only the estimated Gram matrix, its SDP optimizer, and the perturbation bound $\varepsilon$, without requiring knowledge of the ideal Gram matrix.
For a fixed linearly independent state ensemble, the success-probability deviation and misidentification bounds are $O(\varepsilon)$ and $O(\varepsilon^{2})$, respectively.
However, the absolute deviation does not directly indicate the magnitude of the error relative to the ideal success probability, which varies across state ensembles.
We therefore also consider the relative error.
By Lemma~\ref{lemma1},
\begin{equation}
\label{eq:bound_of_p}
P(E_{ideal}) = Q(E_{ideal})  \ge \left(1-\frac{\varepsilon}{\hat{\lambda}_{min}}\right)Q(E_{noisy}).
\end{equation}
Combining Eq.~\eqref{eq:bound_of_p} with Eq.~\eqref{eq:thm2_ineq1_bound} yields the following bound.
\begin{Cor}
\label{Corollary3}
Define the relative error $R_{P}$ in the success probability as
\begin{equation}
R_{P}=\frac{\left| P(E_{noisy})-P(E_{ideal}) \right|}{P(E_{ideal})}.
\end{equation}
Under the conditions of Theorem~\ref{Theorem2},
\begin{equation}
\label{eq:cor_relative_bound}
R_{P} \le \frac{\mu(\Gamma+\nu)}{Q(E_{noisy})-\nu}+(1-\mu).
\end{equation}
\end{Cor}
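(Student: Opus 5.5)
The plan is to divide the intermediate inequality Eq.~\eqref{eq:thm2_ineq1_bound} from the proof of Theorem~\ref{Theorem2} by $P(E_{ideal})$ and to treat its two terms separately. That inequality reads
\begin{equation*}
\left| P(E_{noisy})-P(E_{ideal})\right| \le \mu(\Gamma+\nu)+(1-\mu)P(E_{ideal}).
\end{equation*}
We need $P(E_{ideal})>0$ for the relative error to be defined. This holds because $E_{ideal}$ is positive definite under $\hat{\lambda}_{min}>\varepsilon$, so small positive $q_n$ are feasible. Dividing by $P(E_{ideal})$ gives
\begin{equation*}
R_P \le \frac{\mu(\Gamma+\nu)}{P(E_{ideal})} + (1-\mu).
\end{equation*}
The second term already has the required form.

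For the first term, we use the lower bound Eq.~\eqref{eq:bound_of_p} from Lemma~\ref{lemma1}. It gives $P(E_{ideal})=Q(E_{ideal})\ge (1-\varepsilon/\hat{\lambda}_{min})Q(E_{noisy}) = Q(E_{noisy})-\nu$, where we used $\nu=\varepsilon Q(E_{noisy})/\hat{\lambda}_{min}$. Since $\hat{\lambda}_{min}>\varepsilon$, we have $r=\varepsilon/\hat{\lambda}_{min}<1$. Hence $Q(E_{noisy})-\nu=(1-r)Q(E_{noisy})$, and this is strictly positive whenever $Q(E_{noisy})>0$.

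The claim $Q(E_{noisy})>0$ follows from $E_{noisy}\succeq \hat{\lambda}_{min}\mathbb{I}$. The choice $q_n=\hat{\lambda}_{min}$ is then feasible, so $Q(E_{noisy})\ge \hat{\lambda}_{min}\sum_n\eta_n>0$. The numerator $\mu(\Gamma+\nu)$ is nonnegative. Replacing the denominator $P(E_{ideal})$ by the smaller positive quantity $Q(E_{noisy})-\nu$ therefore can only increase the first term, which yields Eq.~\eqref{eq:cor_relative_bound}.

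The only point requiring care is the positivity of the denominator. This is needed both for dividing by $P(E_{ideal})$ and for the monotonicity step that replaces it by $Q(E_{noisy})-\nu$. Both facts are secured by the standing assumption $\hat{\lambda}_{min}>\varepsilon$ together with positive definiteness, as argued above. The remaining steps are direct algebra.
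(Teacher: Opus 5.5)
Your proposal is correct and matches the paper's proof: you divide Eq.~\eqref{eq:thm2_ineq1_bound} by $P(E_{ideal})$ and then replace the denominator by the lower bound $P(E_{ideal})\ge Q(E_{noisy})-\nu$ from Eq.~\eqref{eq:bound_of_p}. The paper leaves the positivity checks implicit, and your versions are valid: $P(E_{ideal})>0$, and $Q(E_{noisy})-\nu=(1-r)Q(E_{noisy})>0$ via the feasible choice $q_n=\hat{\lambda}_{min}$.
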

\begin{proof}
Dividing Eq.~\eqref{eq:thm2_ineq1_bound} by $P(E_{ideal})$ gives
\begin{equation}
R_{P} \le \frac{\mu(\Gamma+\nu)}{P(E_{ideal})}+(1-\mu),
\end{equation}
and thus Eq.~\eqref{eq:bound_of_p} completes the proof.
\end{proof}

\section{Numerical Results}
\label{sec:main03}

We numerically investigate the proposed NSS-based USD framework using ensembles of $N\in \{2,3,4\}$ states generated by a three-qubit, two-layer hardware-efficient ansatz (HEA), as shown in Fig.~\ref{Figure02}(a).
For each $N$, we randomly select one ensemble by drawing the ansatz parameters uniformly from $[0,2\pi)$, and equal prior probabilities $\eta_{n}=1/N$ are assumed throughout.
In the finite-shot simulations and hardware experiments, the off-diagonal Gram matrix elements $E_{k,\ell}=\langle\phi_{k}|\phi_{\ell}\rangle$ are estimated using the circuit shown in Fig.~\ref{Figure02}(b), while the diagonal entries are fixed to $E_{k,k}=1$ and Hermiticity is enforced by setting $E_{\ell,k}=E_{k, \ell}^{*}$.
The resulting Gram matrix is passed to the reduced SDP in Eq.~\eqref{eq:NSS_SDP_reduced}, which is solved using the splitting conic solver~(SCS)~\cite{Odonoghue2016}.
Quantum-circuit simulations are performed using the PennyLane software development kit~\cite{Bergholm2018}.
We first validate the perturbation bounds derived in Theorem~\ref{Theorem2} and Corollary~\ref{Corollary3} using synthetic Gram-matrix perturbations.
We then examine finite-shot errors in Gram-matrix estimation and evaluate the proposed framework using Gram matrices estimated on IBM quantum hardware.

\begin{figure*}[t]
\centering
\includegraphics[width=0.9\textwidth]{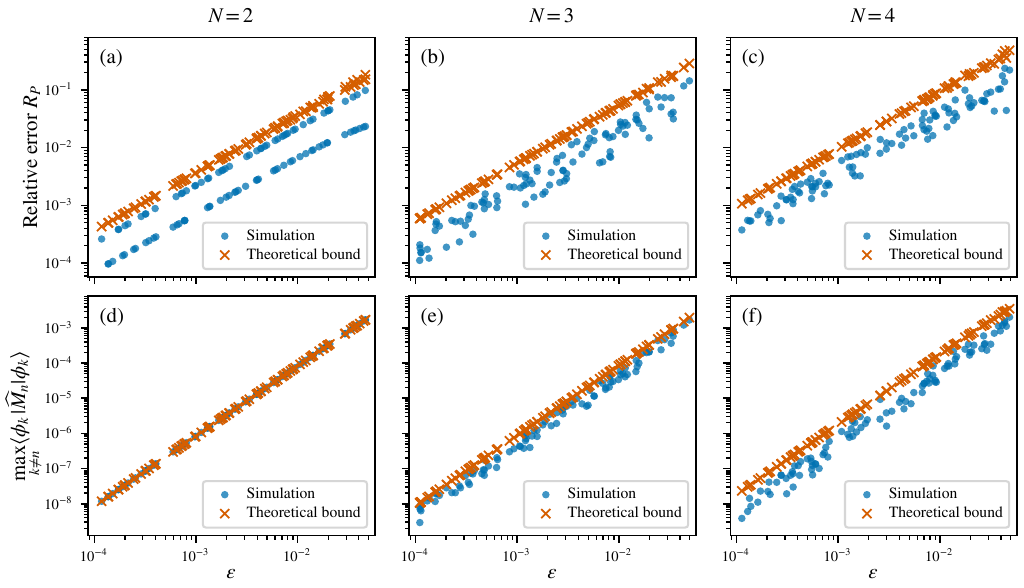}
\caption{
Numerical validation of the robustness bounds for ensembles of $N=2,3,4$ pure states, generated using the three-qubit, two-layer ansatz circuit in Fig.~\ref{Figure02}(a), with equal prior probabilities.
Panels (a)--(c) show the relative error $R_{P}$ in the average success probability and the bound in Eq.~\eqref{eq:cor_relative_bound}.
Panels (d)--(f) show the maximum misidentification probability, $\max_{k\neq n}\bra{\phi_k}\widehat{M}_n\ket{\phi_k}$, and the bound obtained by maximizing the right-hand side of Eq.~(\ref{eq:errorbound}) over $k \neq n$.
For each $N$, one randomly generated state ensemble is fixed, and 100 perturbation strengths $\varepsilon$ are sampled log-uniformly from $[10^{-4}, 5 \times 10^{-2}]$.
For each $\varepsilon$, an independent real symmetric perturbation $\Delta E$ with zero diagonal and $\Vert \Delta E \Vert_{2}=\varepsilon$ is generated.
All instances satisfy $\hat{\lambda}_{min}>\varepsilon$, ensuring that the assumptions of Theorem~\ref{Theorem2} and Corollary~\ref{Corollary3} are met.
Blue dots indicate the computed values, and orange crosses indicate the corresponding theoretical bounds.
}
\label{Figure03}
\end{figure*}

To validate the robustness analysis in Sec.~\ref{sec:main02}, we fix one randomly selected state ensemble for each $N \in \{2, 3, 4\}$ and introduce real symmetric perturbations $\Delta E$ with zero diagonal and spectral norm $\Vert \Delta E \Vert_{2}=\varepsilon$.
For each ensemble, we draw 100 perturbation strengths log-uniformly from $[10^{-4}, 5 \times 10^{-2}]$, independently generate a random perturbation for each strength, and construct $E_{noisy}=E_{ideal}+\Delta E$.
All sampled instances satisfy $\hat{\lambda}_{min}>\varepsilon$.
Figs.~\ref{Figure03}(a)--\ref{Figure03}(c) compare the relative error $R_{P}$ with the bound in Corollary~\ref{Corollary3}, while Figs.~\ref{Figure03}(d)--\ref{Figure03}(f) compare the maximum misidentification probability with the corresponding bound obtained by maximizing Eq.~\eqref{eq:errorbound} over $k \neq n$.
The computed values lie at or below their respective bounds within numerical precision.
The bounds also capture the overall dependence on the perturbation strength.
The relative errors exhibit an approximately linear trend in $\varepsilon$, whereas the maximum misidentification probabilities exhibit an approximately quadratic trend over the sampled range.

The spread of the computed errors reflects their dependence on the perturbation direction, which is not specified by the spectral norm alone.
For $N=2$, a real symmetric zero-diagonal perturbation of fixed norm has only two possible forms, differing by an overall sign.
These two perturbations produce the two branches visible in Fig.~\ref{Figure03}(a).
For $N=3,4$, the additional off-diagonal degrees of freedom allow a wider variety of perturbation directions, giving rise to a more dispersed pattern of computed errors in Figs.~\ref{Figure03}(b) and \ref{Figure03}(c).
Although the bounds incorporate information from the estimated Gram matrix, their derivation does not require knowledge of the exact perturbation and relies on inequalities that need not be saturated for a given perturbation direction.
For the tested ensembles, the bounds capture the overall dependence of the errors on the perturbation strength and provide informative estimates of their possible magnitudes.
Moreover, the misidentification bound is saturated within numerical precision for $N=2$, as shown in Fig.~\ref{Figure03}(d), and remains close to the computed values for many instances with $N=3,4$.

We next examine finite-shot errors associated with the overlap-estimation circuit in Fig.~\ref{Figure02}(b), assuming the absence of circuit noise.
For the real-valued states considered here, each ancilla measurement yields a random variable $X_{k,\ell}^{(s)} \in \{-1, +1\}$, with outcomes 0 and 1 mapped to $+1$ and $-1$, respectively, and $\mathbb{E}[X_{k,\ell}^{(s)}]=(E_{ideal})_{k, \ell}$.
Using $N_{shot}$ independent shots per overlap, we estimate each upper-triangular entry as
\begin{equation}
\label{eq:overlap_estimator}
\hat{E}_{k, \ell}=\frac{1}{N_{shot}}\sum_{s=1}^{N_{shot}}X_{k,\ell}^{(s)}, \quad k < \ell.
\end{equation}
This estimator is unbiased and has variance
\begin{equation}
\operatorname{Var}(\hat{E}_{k, \ell})=\frac{1-[(E_{ideal})_{k, \ell}]^{2}}{N_{shot}},
\end{equation}
so the standard deviation of each estimated overlap scales as $N_{shot}^{-1/2}$.

We set $(E_{noisy})_{k, \ell}=\hat{E}_{k, \ell}$ for $k< \ell$, fix the diagonal entries to unity, and obtain the lower-triangular entries by Hermitian symmetry, ensuring that $\Delta E=E_{noisy}-E_{ideal}$ is Hermitian with zero diagonal.
Since each measurement outcome $X_{k, \ell}^{(s)}$ lies in $[-1, 1]$, applying Hoeffding's inequality to the estimator in Eq.~\eqref{eq:overlap_estimator} gives, for any $\delta>0$,
\begin{equation}
\operatorname{Pr}(|(\Delta E)_{k, \ell}| \ge \delta) \le 2 \operatorname{exp} \left(-\frac{N_{shot} \delta^{2}}{2}\right), \quad k<\ell.
\end{equation}
A union bound over the $N(N-1)/2$ estimated overlaps therefore yields
\begin{equation}
\operatorname{Pr}\left(\max_{k<\ell}|(\Delta E)_{k, \ell}| \ge \delta \right) \le N(N-1)\operatorname{exp} \left(-\frac{N_{shot} \delta^{2}}{2}\right).
\end{equation}
Because $\Delta E$ is Hermitian with zero diagonal, its spectral norm is bounded by its maximum absolute row sum, so that
\begin{equation}
\Vert \Delta E \Vert_{2} \le \max_{k}\sum_{\ell \neq k}|(\Delta E)_{k, \ell}| \le (N-1)\max_{k < \ell}|(\Delta E)_{k, \ell}|.
\end{equation}
Consequently, setting $\delta = \varepsilon/(N-1)$ gives
\begin{equation}
\operatorname{Pr}(\Vert \Delta E \Vert_{2} \ge \varepsilon ) \le N(N-1)\operatorname{exp}\left(-\frac{N_{shot}\varepsilon^{2}}{2(N-1)^{2}}\right).
\end{equation}
For a prescribed failure probability $\alpha \in (0, 1)$, choosing
\begin{equation}
\varepsilon_{H}=(N-1)\sqrt{\frac{2}{N_{shot}}\ln\frac{N(N-1)}{\alpha}}
\end{equation}
ensures that
\begin{equation}
\operatorname{Pr}(\Vert \Delta E \Vert_{2} \le \varepsilon_{H}) \ge 1 - \alpha.
\end{equation}


This bound relates the number of measurement shots to the perturbation magnitude used in Sec.~\ref{sec:main02} without requiring knowledge of $E_{ideal}$.
In particular, the target accuracy $\Vert \Delta E \Vert_{2} \le \varepsilon$ is guaranteed with probability at least $1-\alpha$ if
\begin{equation}
N_{shot} \ge \frac{2(N-1)^{2}}{\varepsilon^{2}}\ln\frac{N(N-1)}{\alpha}.
\end{equation}
This is a conservative sufficient condition, rather than a minimum shot requirement for applying the framework.
For general complex-valued overlaps, the real and imaginary parts can be separately estimated using two ancilla measurement settings.
Applying Hoeffding's inequality and a union bound to both components shows that $\Vert \Delta E \Vert_{2} \le \varepsilon$ holds with probability at least $1-\alpha$ if the number of shots per component satisfies
\begin{equation}
N_{shot} \ge \frac{4(N-1)^{2}}{\varepsilon^{2}}\ln\frac{2N(N-1)}{\alpha}.
\end{equation}
Here, $N_{shot}$ denotes the number of shots per component, so each complex overlap requires a total of $2N_{shot}$ shots.

\begin{figure}[t]
\centering
\includegraphics[width=0.47\textwidth]{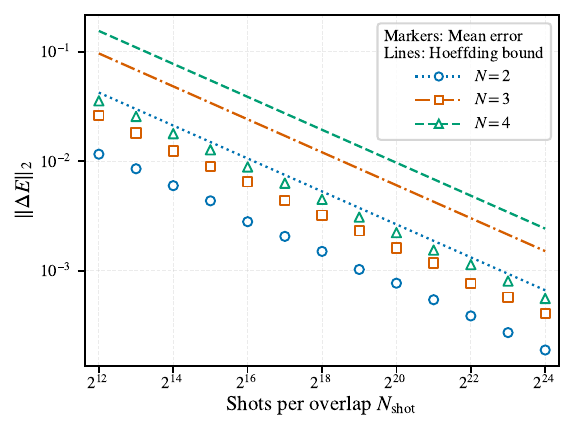}
\caption{
Finite-shot errors in Gram-matrix estimation for the same three-qubit ensembles used in Fig.~\ref{Figure03}, with $N \in \{2,3,4\}$, in the absence of circuit noise.
Markers show the mean spectral-norm error $\Vert\Delta E\Vert_{2}$ over 500 independent repetitions at each shot count, and lines show the Hoeffding bounds with failure probability $\alpha=0.05$.
}
\label{Figure04}
\end{figure}

Fig.~\ref{Figure04} illustrates the finite-shot estimation errors for the same state ensembles used in Fig.~\ref{Figure03}.
For each ensemble, we vary $N_{shot}$ from $2^{12}$ to $2^{24}$ and perform 500 independent repetitions at each shot count.
Measurement counts are sampled from binomial distributions using the exact ancilla outcome probabilities of the overlap-estimation circuit, which is statistically equivalent to independent finite-shot measurements in the absence of circuit noise.
The mean spectral-norm errors exhibit an approximately $N_{shot}^{-1/2}$ scaling, consistent with the shot-count dependence of the Hoeffding bounds.
The bounds lie above the observed mean errors, as expected from their conservative construction.
However, they provide high-probability guarantees for individual Gram-matrix estimates rather than estimates of the mean error.
They therefore serve as a sufficient connection between the measurement budget and the perturbation parameter used in Sec.~\ref{sec:main02}.
The framework can also incorporate tighter bounds on the Gram-matrix estimation error when available.

\begin{table}[t]
\centering

\begingroup
\footnotesize
\setlength{\tabcolsep}{4pt}
\begin{tabular}{c|c|c|cc|cc}
\hline\hline
& &  &\multicolumn{2}{c|}{Relative error $R_{P}$}
  & \multicolumn{2}{c}{$\max_{k \neq n} \langle \phi_{k}| \widehat{M}_{n} | \phi_{k} \rangle$} \\[-9pt]
$N$ & $\|\Delta E\|_2$ & $\hat{\lambda}_{min}$
& Value & Bound
& Value & Bound \\
\hline
2 & 0.0229 & 0.7035 & 0.0138 & 0.0790 & $4.28 \times 10^{-4}$ & $4.28 \times 10^{-4}$ \\
3 & 0.0307 & 0.5055 & 0.0213 & 0.1514 & $5.33 \times 10^{-4}$ & $7.62 \times 10^{-4}$ \\
4 & 0.0865 & 0.3465 & 0.0682 & 0.6001 & $3.728 \times 10^{-3}$ & $9.805 \times 10^{-3}$ \\
\hline\hline
\end{tabular}
\endgroup
\caption{
Performance of the POVMs constructed using Gram matrices estimated on \texttt{ibm\_pittsburgh} with $2^{14}$ shots per overlap for the ensembles used in Fig.~\ref{Figure03}.
The error $\|\Delta E\|_2$ is evaluated against the ideal Gram matrix and used for POVM rescaling and bound evaluation.
The performance is evaluated classically for the reference states and compared with the corresponding theoretical bounds.
Values are rounded for presentation.
}
\label{table01}
\end{table}

Finally, we demonstrate the proposed framework using overlap measurements performed on the IBM quantum processor \texttt{ibm\_pittsburgh}.
For the state ensembles considered in Fig.~\ref{Figure03}, we allocate $2^{14}$ shots to each upper-triangular off-diagonal overlap.
The resulting hardware-estimated Gram matrices are passed to the reduced SDP, and the POVMs are constructed with the rescaling described in Sec.~\ref{sec:main02}.
For this benchmark, the perturbation magnitude is evaluated against the known ideal Gram matrix and used in the rescaling and theoretical bounds.
Table~\ref{table01} summarizes the performance of the constructed POVMs, evaluated classically for the reference states.
All three instances satisfy $\hat{\lambda}_{min} > \varepsilon=\Vert \Delta E \Vert_{2}$, and the constructed POVMs satisfy positivity and completeness within numerical tolerance.
The relative errors in the average success probability range from approximately $1.38\%$ to $6.82\%$, while the maximum misidentification probabilities remain below $3.73 \times 10^{-3}$.
Both quantities remain at or below their corresponding theoretical bounds.
These results support the potential applicability of the proposed framework to noisy quantum hardware by demonstrating that hardware-estimated Gram matrices can be used to construct valid POVMs with quantitatively assessed performance.
In practical applications where the ideal Gram matrix is unknown, however, a reliable upper bound on the estimation error, accounting for both finite-shot fluctuations and device imperfections, must be obtained independently.
Establishing such bounds and implementing the constructed POVMs on hardware remain directions for future work.

\section{Conclusion}
\label{sec:conclusion}

In this work, we investigated the NSS-based approach to finding optimal measurements for USD of linearly independent pure states.
The NSS framework is naturally suited to this task when the states are not available as explicit classical descriptions but are instead accessed through their pairwise overlaps, as is the case for circuit-prepared states.
We showed that taking the discriminated states themselves as the ansatz, together with the reciprocal-state structure of the optimal POVM, reduces the original matrix-variable formulation to an optimization over $N$ scalar variables, thereby simplifying the classical optimization while preserving the optimal USD success probability for an exact Gram matrix.
We further analyzed the robustness of the resulting measurement against errors in the estimated Gram matrix.
In particular, we derived bounds on both the deviation of the average success probability and the violation of the no-error condition that can be evaluated from the estimated Gram matrix, its SDP solution, and a characterized upper bound on the Gram-matrix error, without requiring knowledge of the ideal Gram matrix.
These results establish the robustness of the proposed framework under bounded estimation error and, importantly, provide a practical means of assessing this robustness from quantities available in an experimental implementation.
Numerical simulations validated the derived bounds and characterized the finite-shot contribution to the Gram-matrix estimation error. 
We also estimated pairwise overlaps through experiments on IBM quantum hardware, and then assessed the performance of the resulting POVMs constructed from these estimates.

An important direction for future work is to improve the practicality of the proposed framework.
First, constructing the Gram matrix requires estimating all pairwise overlaps between the discriminated states, resulting in a number of overlap evaluations that grows quadratically with the number of states.
This measurement cost may become a significant bottleneck for large state ensembles.
Developing more efficient strategies for estimating the Gram matrix would therefore directly improve the scalability of the framework.
It would be interesting to investigate whether techniques such as classical shadows~\cite{Huang2020} can be adapted to reduce the measurement overhead associated with estimating the required overlaps.
Second, even after the discrimination POVM has been determined from the estimated Gram matrix, efficiently realizing the resulting measurement on quantum hardware remains a separate challenge.
General POVMs can be implemented by introducing ancillary systems and performing projective measurements~\cite{Nielsen2010}, but such implementations typically require additional qubits and quantum gates, increasing circuit depth and resource overhead and potentially limiting their practicality on near-term quantum devices.
Developing hardware-efficient realizations of the resulting POVMs is therefore another important step toward practical implementation of the proposed framework.

Another direction for future work is to extend the proposed framework to settings in which additional information about state-preparation noise is available.
In the present framework, the discrimination task is defined with respect to the intended pure-state ensemble, while imperfections affecting the overlap estimation are represented through the perturbed Gram matrix.
This provides a natural reference when the underlying noise process is not known a priori or cannot be characterized reliably, since in such cases the actual noisy-state ensemble itself cannot be specified with sufficient confidence for direct optimization of the measurement.
When the state-preparation noise can instead be reliably modeled or experimentally characterized, the resulting states may be treated as a noisy, generally mixed-state ensemble, and this additional information could be incorporated into the optimization of the discrimination measurement.
Developing such a noise-aware extension may enable the construction of POVMs better adapted to the actual noisy ensemble.
Beyond the discrimination of individual quantum states, another promising direction is to extend the framework to quantum classification tasks.
In such settings, the objective is to assign an input state to one of several classes rather than to identify a particular state.
This can be naturally related to set discrimination~\cite{zhang2002}, where each class is represented by a set of quantum states and the measurement is designed to distinguish between these sets.
Extending the present NSS-based formulation to this setting may therefore enable a measurement-based approach to quantum classification, particularly for circuit-prepared states whose relevant overlaps can be estimated on a quantum device.

\section*{Author Contributions}
All authors jointly conceived the project. 
J.H. and M.C. developed the theoretical results and proofs and performed the numerical simulations. 
J.H. conducted the IBM quantum hardware experiments and wrote the initial draft.
M.C. further developed and revised the manuscript.
All authors discussed the results and contributed to reviewing and revising the manuscript.
Large language models were used to assist in discussing and checking proofs, drafting and editing the manuscript, and reviewing and modifying the simulation code.
All authors have reviewed and verified the resulting text, proofs, and code and take full responsibility for the content of this work.

\section*{Acknowledgements}
This research was supported by the Korea Institute of Science and Technology Information~(KISTI) R\&D (K26L3M3C3, K26L5M1C2) and the National Research Council of Science \& Technology~(NST) (GTL25011-000).
J. H. was also supported by Creation of the Quantum Information Science R\&D Ecosystem (2022M3H3A106307411) through the National Research Foundation of Korea (NRF) funded by the Korean government (Ministry of Science and ICT).
Additional support was provided by the Creation of the Quantum Information Science R\&D Ecosystem (Based on Human Resources) through the NRF funded by the Korean government (RS-2023-NR057243).


\bibliography{refs}

\end{document}